\newtheorem{theorem}{Theorem}[section]
\newtheorem{lemma}[theorem]{Lemma}
\newtheorem{definition}[theorem]{Definition}
\newtheorem{claim}[theorem]{Claim}
\newtheorem{invariant}[theorem]{Invariant}
\newenvironment{proof}{\vspace{-0.05in}\noindent{\bf Proof:}}%
        {\hspace*{\fill}$\Box$}
\newcommand{\del}{\delta}
\newcommand{\eps}{\epsilon}
\newcommand{\oeps}{(1 + \eps)}
\newcommand{\sm}{\setminus}
\newcommand{\ot}{\tilde{O}}
\newcommand{\ceil}[1]{\lceil {#1} \rceil}
\newcommand{\floor}[1]{\lfloor {#1} \rfloor}
\newcommand{\old}{\mbox{\sc old}}
\newcommand{\round}[2]{\mbox{{\sc round}}_{#1}(#2)}
\newcommand{\indep}{\mathcal{IV}}
\newcommand{\neighbors}{\mbox{\sc neighbors}}
\newcommand{\parent}{p}
\newcommand{\level}{\mbox{\bf level}}
\newcommand{\dist}{\mbox{\bf dist}}
\newcommand{\cutoff}{\mbox{\bf cut-off}}
\newcommand{\bound}[1]{\mbox{{\sc bound}}_{#1}}
\newcommand{\wses}{\mbox{\bf WSES}}
\newcommand{\es}{\mbox{\bf ES}}
\newcommand{\oweight}{w_o}
\newcommand{\olevel}{\level_o}
\newcommand{\firstdif}{{\sc first-difference}}
\newcommand{\seconddif}{{\sc second-difference}}
\newcommand{\thirddif}{{\sc third-difference}}
\newcommand{\matha}{\mathcal{A}}
\newcommand{\gtau}{G_{\tau}}
\newcommand{\vtau}{V_{\tau}}
\newcommand{\etau}{E_{\tau}}
\newcommand{\dtau}{\dist_{\tau}}
\newcommand{\pitau}{\pi_{\tau}}
\newcommand{\dptau}{\dist'_{\tau}}
\newcommand{\ds}{d^{s}}
\newcommand{\gtaus}{\gtau^s}
\newcommand{\heavy}{\mbox{\sc heavy}}
\newcommand{\light}{\mbox{\sc light}}
\newcommand{\heavytau}{\heavy(\tau)}
\newcommand{\lighttau}{\light(\tau)}
\newcommand{\ghtau}{G[\heavytau]}
\newcommand{\vstarh}{V^*_H}
\newcommand{\gstar}{G^*}
\newcommand{\estar}{E^*}
\newcommand{\pistar}{\pi^*}
\newcommand{\dstar}{\dist^*}
\newcommand{\ballone}{\mbox{\sc ball}}
\title{Deterministic Partially Dynamic Single Source Shortest Paths in Weighted Graphs}
\author{
 Aaron Bernstein\thanks{Department of Mathematics, Technical University of Berlin. Supported in part by the Einstein Grant at TU Berlin. \tt{bernstei@gmail.com}}}
\begin{document}

\maketitle

\begin{abstract}
In this paper we consider the decremental single-source shortest paths (SSSP) problem, where given a graph $G$ and a source node $s$ the goal is to maintain shortest distances between $s$ and all other nodes in $G$ under a sequence of online adversarial edge deletions.
In their seminal work, Even and Shiloach [JACM 1981] presented an exact solution to the problem in unweighted graphs with only $O(mn)$ total update time over all edge deletions. 
Their classic algorithm was the state of the art for the decremental SSSP problem for three decades, even when approximate shortest paths are allowed.

The first improvement over the Even-Shiloach algorithm was given by Bernstein and Roditty [SODA 2011], 
who for the case of an unweighted and undirected graph presented a $(1+\epsilon)$-approximate  algorithm with constant query time and a total update time of
$O(n^{2+o(1)})$.
This work triggered a series of new results, culminating in a recent breakthrough of Henzinger, Krinninger and Nanongkai [FOCS 14], who presented a $(1+\epsilon)$-approximate algorithm for
undirected weighted graphs whose total update time is near linear: $O(m^{1+o(1)}\log(W))$,
where $W$ is the ratio of the heaviest to the lightest edge weight in the graph.
In this paper they posed as a major open problem the question of derandomizing their result.

Until very recently, all known improvements over the Even-Shiloach algorithm were randomized
and required the assumption of a non-adaptive adversary.
In STOC 2016, Bernstein and Chechik showed the first \emph{deterministic} 
algorithm to go beyond $O(mn)$ total update time: 
the algorithm is also $(1+\epsilon)$-approximate, and has total update time $\tilde{O}(n^2)$.
In SODA 2017, the same authors presented an algorithm with total update time $\tilde{O}(mn^{3/4})$.
However, both algorithms are restricted to undirected, unweighted graphs.
We present the \emph{first} deterministic algorithm for \emph{weighted} undirected graphs to go beyond the $O(mn)$ bound. 
The total update time is $\tilde{O}(n^2 \log(W))$. 

\end{abstract}

\section{Introduction}

The objective of dynamic graph algorithms is to handle an online sequence of update operations while maintaining a desirable functionality on the graph, e.g., the ability to answer shortest path queries.
An update operation may involve a deletion or insertion of an edge or a node, or a change in an edge's weight.
In case the algorithm can handle only deletions and weight increases it is called decremental,
if it can handle only insertions and weight decreases it is called incremental, and if it can handle both it is called fully dynamic.

In this paper we consider the problem of (approximate) single source shortest paths (SSSP) in unweighted undirected graphs, in the decremental setting.
Specifically, given an undirected graph $G$ with positive weights and a source node $s$, our algorithm needs to preform the following operations:
1) Delete($e$) -- delete the edge $e$ from the graph 
2) Increase-Weight($e$) -- increase the weight of $e$
3) Distance($v$) -- return the distance between $s$ and $v$, i.e., $\dist(s,v)$, in the current graph $G$.

Fully dynamic shortest paths has a very clear motivation, as computing shortest paths in a graph is one of the fundamental problems of graph algorithms, and many shortest path applications must deal with a graph that is changing over time. The incremental setting is somewhat more restricted, but is applicable to any setting in which the network is only expanding. The decremental setting is often very important from a theoretical perspective, as decremental shortest paths (and decremental \emph{single source} shortest paths especially) are used as a building block in a very large variety of fully dynamic shortest paths algorithms; see e.g. \cite{HeKi01,King99,Be09,BaHaSe03,RoZw12,AbrahamCT14} to name just a few. Decremental shortest paths can also have applications to non-dynamic graph problems;
see e.g. Madry's paper on efficiently computing multicommodity flows \cite{Madry10}.

We say that an algorithm has an approximation guarantee of $\alpha$ if its output to the query Distance($v$) is never smaller than the actual shortest distance and is not more than $\alpha$ times the shortest distance.
Dynamic algorithms are typically judged by two parameters:
the time it takes the algorithm to adapt to an update (the Delete or Increase-Weight operation),
and the time to process a query (the Distance operation).
Typically one tries to keep the query time small
(polylog or constant),
while getting the update time as low as possible.
All the algorithms discussed in this paper have constant query time, unless noted otherwise.
In the decremental setting, which is the focus of this paper,
one usually considers the aggregate sum of update times
over the \emph{entire} sequence of deletions,
which is referred to as the {\em total update time}.

\paragraph{Related Work:}
The most naive solution to dynamic SSSP is to simply invoke a static SSSP algorithm after every deletion, which requires $\tilde{O}(m)$ time, using
e.g. Dijkstra's algorithm. (The $\tilde{O}$ notation suppresses polylogarithmic factors.)
For unweighted graphs, since there can be a total of $m$ deletions,
the total update time for the naive implementation is thus $\Omega(m^2)$.

For fully dynamic SSSP, nothing better than the trivial $O(m)$ time per update is known. That is,
we do not know how to do better than reconstructing from scratch after every edge update. For this reason,
researchers have turned to the decremental (or incremental) case in search of a better solution.
The first improvement stems all the way back to 1981, when Even and Shiloach \cite{EvenS81}
showed how to achieve total update time $O(mn)$ in unweighted undirected graphs.
A similar result was independently found by Dinitz \cite{Di06}.
This was later generalized to directed graphs by Henzinger and King \cite{HenzingerK95FOCS}.
This $O(mn)$ total update time bound is still the state of art, and there are conditional lower bounds
\cite{RoZw04b,HenzingerKNS15} showing that it is in fact optimal up to log factors. (The reductions are to boolean matrix multiplication and the online matrix-vector conjecture respectively). 

These lower bounds motivated the study of the approximate version of this problem.
In 2011, Bernstein and Roditty \cite{BernsteinR11} presented the first algorithm to go beyond the $O(mn)$
bound of Even and Shiloach \cite{EvenS81}:
they presented a $(1+\eps)$-approximate decremental SSSP algorithm for undirected unweighted graphs with
$O(n^{2+O(1/\sqrt{\log{n}})}) = O(n^{2+o(1)})$ total update time.
Henzinger, Krinninger and Nanongkai \cite{HeKrNa14} later improved the total update time to
$O(n^{1.8+o(1)} + m^{1+o(1)})$,
and soon after the same authors \cite{HenzingerKNFOCS14} achieved a close to optimal total update time of
$O(m^{1 + o(1)} \log{W})$ in undirected weighted graphs, where $W$ is the largest weight in the graph (assuming the minimum edge weight is 1).
The same authors also showed that one can go beyond $O(mn)$ total update time bound in \emph{directed} graphs (with a $\oeps$ approximation) \cite{HenzingerKNSTOC14,HenzingerKN15}, although the state of art is still only a small improvement: total update time $O(m n^{0.9 + o(1)} \log{W})$.

However, every single one of these improvements over the $O(mn)$ bound relies on randomization, and
has to make the additional assumption of a \emph{non-adaptive} adversary. In particular,
they all assume that the updates of the adversary are completely independent from the shortest paths
or distances returned to the user, i.e. that the updates are fixed in advance.
This makes these algorithms unsuitable for many settings,
and also prevents us from using them as a black box data structure. 
For this reason, it is highly desirable to have deterministic algorithms for the problem.

Very recently, in STOC 2016, Bernstein and Chechik presented the first \emph{deterministic} algorithm to go beyond $O(mn)$ total update time, again with a $\oeps$ approximation. The total update time is $\ot(n^2)$ \cite{BernsteinC16}. They followed this up with a second deterministic algorithm \cite{BernsteinC17} that has total update time $\ot(n^{1.5}\sqrt{m}) = \ot(mn^{3/4})$. 
Both algorithms rely on the same basic technique, so this is currently the only know technique for deterministically breaking through the $O(mn)$ barrier.
However, both results above were limited to undirected, unweighted graphs. In this paper, we show that this core technique
can also be applied to weighted graphs.

\paragraph{Our Results:}
\begin{theorem}
\label{thm:main}
Let $G$ be an undirected graph with positive edge weights
subject to a sequence of edge deletions and weight increases,
let $s$ be a fixed source, and let $W$ be the ratio between the largest and smallest edge weights in the graph.
there exists a \emph{deterministic} algorithm that maintains $\oeps$-approximate
distances from $s$ to every vertex
in total update time $O(m\log^2(n)\log(nW) + n^2\log(n)\log(nW)\eps^{-2})$.
The query time is $O(1)$.
(Like most decremental shortest paths algorithms, our result can very easily be extended with the same update time to the incremental case, where the update sequence contains edge insertions and weight increases.)
\end{theorem}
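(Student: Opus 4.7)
The plan is to extend the Bernstein--Chechik deterministic technique \cite{BernsteinC16,BernsteinC17} for unweighted undirected graphs to weighted graphs via a standard geometric scale decomposition. For each scale $\tau \in \{1, 2, 4, \ldots, 2^{\lceil \log(nW)\rceil}\}$ I would maintain an independent data structure $\matha_\tau$ responsible for a $\oeps$-approximate distance estimate for every vertex whose true $s$-distance lies in $[\tau, 2\tau]$; the final answer to a Distance query is the minimum estimate across all $O(\log(nW))$ scales, which can be returned in $O(1)$ worst-case time with routine bookkeeping (e.g., a min-heap over the per-scale estimates of each vertex).

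At a fixed scale $\tau$, edges of weight exceeding $2\tau$ cannot lie on a shortest path of length at most $2\tau$ and can be ignored, while edges of weight less than $\eps\tau/n$ can be rounded up to $\eps\tau/n$, contributing at most an additive $\eps\tau$ to any simple path. After this rounding the relevant subgraph $\gtau$ has all edge weights equal to integer multiples of $\eps\tau/n$ in a range of width $O(n/\eps)$, so the problem reduces to maintaining a bounded-depth weighted ES-tree of effective depth $d=O(n/\eps)$ in $\gtau$. A naive weighted ES-tree gives $O(md)$ per scale, which would only match the $O(mn)$ bound of \cite{EvenS81}; the real task is to replace this with $\tilde O(nd)$ per scale by adapting the Bernstein--Chechik primitive.

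The Bernstein--Chechik technique dynamically identifies a small \heavy{} set consisting of vertices whose short-distance neighborhoods are large, centralizes the update work for their incident edges, and exploits the fact that \light{} vertices have small enough neighborhoods to be maintained directly in the ES-tree at low amortized cost. Its analysis is phrased in terms of hop counts, which coincide with distances in the unweighted case; the main obstacle I anticipate is that in $\gtau$ hops and rounded distances can differ by a factor up to $n/\eps$, and the potential argument that charges each edge's work against a level increase must be rephrased against the scaled integer level in $\gtau$ rather than against hop count. I expect this to go through once the \heavy/\light{} threshold is redefined in terms of $\gtau$-distance-bounded balls of depth $\Theta(n/\eps)$ instead of hop-bounded balls, since each vertex's scaled level can only increase $O(d)=O(n/\eps)$ times and the relevant ball still has size $\tilde O(n)$. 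Summing over the $O(\log(nW))$ scales yields an $\tilde{O}(n^2 \log(nW)/\eps)$ ES maintenance cost and an $\tilde{O}(m\log(nW))$ bookkeeping cost per scale; an additional $1/\eps$ factor in the $n^2$ term from refining scales (to avoid error accumulation across scale boundaries) and an extra $\log n$ factor from standard dynamic-tree data structures then yields the claimed bound of $O(m\log^2(n)\log(nW) + n^2\log(n)\log(nW)\eps^{-2})$. The extension to the incremental setting follows by the standard symmetry of ES-trees under monotone weight changes.
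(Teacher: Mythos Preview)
Your proposal has a genuine gap at the step where you write ``I expect this to go through once the \heavy/\light\ threshold is redefined in terms of $\gtau$-distance-bounded balls of depth $\Theta(n/\eps)$.'' This is precisely the obstruction the paper isolates in its High-Level Overview, and a single distance-based ball threshold does not resolve it. After your rounding, the scaled graph can be dense with edges of large scaled weight (say $\Theta(n/\eps)$) while the shortest path consists of many edges of scaled weight $1$. The unweighted Bernstein--Chechik argument---a shortest path meets few high-degree vertices because their $1$-hop neighborhoods would otherwise overlap and shortcut the path---fails here: if a high-degree vertex's incident edges all have large weight, two such vertices can sit at scaled distance $2$ along the path yet have intersecting neighborhoods without contradicting shortestness, since the shortcut through the common neighbor has length $\Theta(n/\eps)$. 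So one cannot bound the contribution of high-degree vertices, and there is no way to ``ignore'' them. A ball of radius $\Theta(n/\eps)$, on the other hand, is essentially the whole graph and gives no useful threshold.

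What the paper actually does involves two ideas your proposal does not contain. First, heaviness is stratified by edge-weight level: each vertex $v$ gets a $\cutoff_\tau(v)$ equal to the largest $i$ with $|I_i(v)|\ge \tau 2^i$, and an edge is heavy if its level lies below either endpoint's cut-off. The resulting threshold graph is not sparse overall---it can have $\Omega(n^2)$ edges---but it has at most $O(n\tau 2^i)$ light edges at level $i$, i.e.\ few low-weight edges and progressively more high-weight ones (Lemma~\ref{lem:gtau-sparse}). Second, a new weight-sensitive Even--Shiloach tree ($\wses$) is developed that touches edge $(u,v)$ only when $\del(u)$ or $\del(v)$ crosses a multiple of $\eps\cdot\oweight(u,v)$, so each edge is charged $O(d/(\eps\, \oweight(u,v)))$ times rather than $O(d)$ times. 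Summing $O(n\tau 2^i)\cdot O(d/(\eps 2^i))$ over the $O(\log n)$ relevant levels gives $\tilde O(n^2\eps^{-2})$ per distance scale. Neither the per-level cut-off nor the weight-sensitive charging is a routine adaptation of \cite{BernsteinC16}; without them the reduction you describe only reproduces the $O(mn)$ bound.
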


(Technical Note: In weighted graphs the number of updates can be very large 
-- i.e. if each update just increases some edge weight by some small $\eps$ --
and so in addition to the total update time above, the algorithm necessarily requires
an additional $O(1)$ per update. This $O(1)$ factor is present in all dynamic algorithms,
and is typically omitted.)

Our algorithm does not match the randomized state of the art of $\ot(m^{1 + o(1)}\log(W))$
\cite{HenzingerKNFOCS14},
but it is optimal up to log factors for dense graphs,
and is the first deterministic algorithm to go beyond the $O(mn)$ barrier in weighted graphs.
The algorithm is also much simpler than the randomized algorithm for weighted graphs 
, and it does not incur the extra $n^{o(1)}$ factor present in the randomized algorithms.
Thus, our algorithm is in fact faster than \emph{all} existing randomized algorithms for the problem
in dense weighted graphs where $m = \Omega(n^{2 - 1/\sqrt{\log(n)}})$.

As a final remark, we note that all previous deterministic algorithms to 
go beyond the $O(mn)$ bound \cite{BernsteinC16, BernsteinC17} have the strange
drawback that there is no obvious way to return an approximate path,
only a distance. Our algorithm unfortunately shares this drawback.
The reason is that whereas in other dynamic shortest path algorithms
the distance returned corresponds to some path in the graph,
our distance involves an additive error that is bounded through
a structural claim about the non-existence of certain paths,
but does not itself correspond to an actual path 
(see Lemma 4.4 in \cite{BernsteinC16}, Lemma 5.3 in \cite{BernsteinC17}, and 
Lemma \ref{lem:gtau-approx} in our paper.) 

\paragraph{Preliminaries:}
In our model, an undirected weighted graph is subject to deletions and weight increases.
All weights in the original graph are positive.
Let $G = (V,E)$ always refer to the \emph{current} versions of the graph.
Let $m$ refer to the number of edges in the original graph, and
$n$ to the number of vertices.
Let $w(u,v)$ refer to the weight of edge $(u,v)$.
\emph{Let us assume for simplicity that the minimum weighted
in the original graph is at least 1},
and note that since edge weights only increase, 
this will be true of all version of the graph.
Let $W$ be the largest edge weight to ever appear in the graph.
For any pair of vertices $u,v$,
let $\pi(u,v)$ be the shortest $u-v$ path in $G$ (ties can be broken arbitrarily),
and let $\dist(u,v)$ be the length of $\pi(u,v)$.
Let $s$ be the fixed source from which our algorithm must maintain approximate distances,
and let $\eps$ refer to our approximation parameter;
when the adversary queries the distance to a vertex $v$,
the algorithm's guarantee is to return a distance $\dist'(v)$
such that $\dist(s,v) \leq \dist'(s,v) \leq (1 + O(\eps)) \dist(s,v)$.
Note that the $(1 + O(\eps))$ approximation factor can always be reduced
to $(1 + \eps)$ without affecting the asymptotic running time
by simply starting with a suitably smaller $\eps'$.

Given any two sets $S, T$ we define the set difference
$S \sm T$ to contain all elements $s$ such that $s \in S$ but $s \notin T$.
We will measure the update time of the dynamic subroutines used by our algorithm
in terms of their \emph{total} update time over the entire sequence of edge changes.
Note that although edges in the main graph $G$ are only being deleted,
there may be edge insertions into the auxiliary graphs used by the algorithm.

\section{High Level Overview}
Our algorithm extends the techniques in the STOC 2016 paper of Bernstein and Chechik 
\cite{BernsteinC16} from unweighted to weighted graphs. We now briefly summarize
their approach for unweighted graphs, aiming in this overview for total update time
$\ot(n^{2.5})$ instead of $\ot(n^2)$. We start with the well known fact that the classic Even and Shiloach algorithm 
has total update time $O(n^{2.5})$ \cite{EvenS81} if we only care about distances less than
$\sqrt{n}$. So the hard case is long distances in a dense graph. The key
observation of \cite{BernsteinC16} is that these two problematic poles cannot coexist.
In particular, Bernstein and Chechik showed that any shortest path contains
at most $3\sqrt{n}$ vertices of degree more than $\sqrt{n}$. The basic reason
is that if we look at every third high-degree 
vertex on the shortest path, then these vertices must have mutually disjoint neighborhoods,
since otherwise there would be a shorter path between them of length 2;
the claim then follows because each high-degree vertex has at least $\sqrt{n}$ neighbors,
and there are only $n$ vertices in total. 
Thus, Bernstein and Chechik show that the total 
contribution of high-degree vertices to the shortest path is small.
They then argue that this allows us to effectively "ignore" 
all vertices of degree $\geq \sqrt{n}$, at the cost of an additive error of only $O(\sqrt{n})$.
(we do not go into details of this ignoring here.)
This completed their result because for distances less than $O(\sqrt{n}\eps^{-1})$
the classic Even and Shiloach has total update time $O(n^{2.5}\eps^{-1})$, whereas
for longer distances we can afford the $O(\sqrt{n})$ additive error from ignoring
high degree vertices (it is subsumed into the $\oeps$ multiplicative error), 
so the resulting graph only $O(n^{1.5})$ edges.

This technique has no easy extension to weighted graphs via scaling.
To see this, say there were two edge weights, $1$ and $\sqrt{n}$, and that
every vertex had very few incident edges of weight $1$, but $\sqrt{n}$
edges of weight $\sqrt{n}$. Then it is easy to see that distances in the graph
could still be as large as $n$ (as opposed to $\sqrt{n}$ in the unweighted case). 
But we cannot scale weights down, or use some
sort of hop-distance technique found elsewhere in the literature, because although
the heavy edges are the ones that cause the graph to be dense, the shortest path
could still contain many edges of weight $1$. There is thus no way in a weighted graph to ignore high-degree vertices.

To overcome this, we switch focus to a notion of degree that considers edge weights. 
In particular, note that in 
the example above, although the graph was dense,
the total number of edges of weight $1$ was small. 
The first step in our result is to formalize this observation, and show that 
although we cannot simply ignore vertices of high degree, for any vertex $v$ we
\emph{can} ignore its low-weight incident edges if there are many of them.
The proof of this is similar to the proof that we can afford to 
ignore dense vertices in unweighted graphs,
but because we have to deal with many edge different weights at once, 
the analysis is more sophisticated.

The problem is that on its own ignoring large neighborhoods of low-weight edges doesn't help: 
the result is a graph that is guaranteed to only have a small number of low-weight edges
(and a medium number of medium-weight edges), but the graph can still be very dense with
high-weight edges. To overcome this, we develop a variation on the standard Even and Shiloach
algorithm \cite{EvenS81} which is more efficient at dealing with heavy edges
but incurs a $\oeps$ approximation.
The basic intuition is as follows, though the details are somewhat complex. The classic algorithm
guarantees that we only touch an edge $(u,v)$ when the distance $\dist(s,u)$ or $\dist(s,v)$ increases.
But if $(u,v)$ has high weight, and $\dist(s,u)$ only increases by $1$, then the distance
increase is insignificant with respect to the total weight on $(u,v)$, and so we can 
afford to ignore it. 
The number of times we touch edge $(u,v)$ thus ends up being proportional to 
$\eps^{-1}/w(u,v)$.

All in all, our algorithm is based on the framework of Bernstein and Chechik
for unweighted graphs \cite{BernsteinC16}, but requires significant modifications
both to the high-level orientation (we need a new notion of sparsity that depends on edge weights),
as well as to the technical details. The basic approach introduced in \cite{BernsteinC16} of ignoring certain classes of dense vertices has
proved quite versatile and powerful 
(see the two very different applications of it in 
\cite{BernsteinC16} and \cite{BernsteinC17}), and is currently
the \emph{only} known approach for going beyond $O(mn)$ deterministically, so 
our extension to weighted graphs might be useful in further applications
of this approach. Also, our extension of the classic Even and Shiloach algorithm is
presented in extreme generality, and might prove useful in other dynamic shortest path problems
where one has a graph that has more edges of high weight than of low weight.

\section{The Threshold Graph}

In our algorithm, each vertex will have a weight cutoff, denoted $\cutoff(v)$,
and will end up "ignoring" the edges below that cutoff. As described in the high-level
overview above, the required degree to ignore edges goes up as the edge weights go up.

\begin{definition}
\label{def:threshold}
Let the level of edge $(u,v)$ be $\level(u,v) = \left\lfloor \log(w(u,v)) \right\rfloor$.
For any vertex $v$, let $I(v)$ contain all edges incident to $v$ 
($I$ for incident), and let $I_i(v)$ contain all edges incident to $v$
of level $i$ or less. Now, for any positive threshold $\tau$ (not necessarily integral), 
and any vertex $v$,
we define $\cutoff_{\tau}(v)$ to be the \emph{maximum}
index $i$ such that $I_i(v)$ contains at least $\tau 2^i$ edges;
if no such index exists, $\cutoff_{\tau(v)}$ is set to $0$.
We say that an edge $(u,v)$ is $\tau$-heavy if
$\level(u,v) \leq \cutoff_\tau(u)$ OR $\level(u,v) \leq \cutoff_\tau(v)$;
we say that it is $\tau$-light otherwise. 
Let $\heavytau$ be the set of all $\tau$-heavy edges in $G$,
and let $\lighttau$ be the set of all $\tau$-light edges in $G$. 
Let $\ghtau$ be the graph $(V,\heavytau)$.
Note that the levels, cut-offs, heaviness, and lightness are
\emph{always} defined with respect to the main graph $G$, 
never with respect to auxiliary graphs. 
Since $\tau$ is usually clear from context, we often just write $\cutoff(v)$, heavy, and light.
\end{definition} 

\begin{definition}
\label{def:gtau}
Given a graph $G = (V,E)$ with $|V| = n$ and $|E| = m$, and a 
positive threshold $\tau$,
define the threshold graph $\gtau = (\vtau, \etau)$ as follows
\begin{itemize}
\item $\vtau$ contains every vertex $v \in V$.
\item $\vtau$ also contains an additional vertex $c$ for each
connected component $C$ in the graph $\ghtau = (V, \heavytau)$
\item $\etau$ contains all edges in $\lighttau$, with their original weights.
\item For every vertex $v \in V$, $\etau$ contains an
edge from $v$ to $c$ of weight $1/2$, where $c$ is the component vertex in
$\vtau \sm V$ that corresponds to the component $C$ in $\ghtau$ that contains $v$.
\end{itemize}
For any pair of vertices $s,t \in V$ define $\pitau(s,t)$ to be the shortest
path from $s$ to $t$ in $\gtau$, and define $\dtau(s,t)$ to be the weight of this path.
\end{definition}

\begin{lemma}
\label{lem:gtau-sparse}
For any $i$, $\gtau$ contains at most $n$ edges of weight $1/2$ that are not in $E$ 
(one per vertex in $V$),
as well at most $O(n \cdot \tau \cdot 2^i)$ edges at level $i$.
\end{lemma}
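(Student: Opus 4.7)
The bound on non-$E$ edges is immediate from Definition~\ref{def:gtau}: the only edges of $\gtau$ that are not in $E$ are the virtual weight-$1/2$ edges from each $v\in V$ to the component vertex of the heavy component containing $v$, and the definition adds exactly one such edge per vertex in $V$. This gives at most $n$.

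For the level-$i$ bound, I would first observe that the virtual edges all have weight $1/2$ and hence lie at level $\lfloor\log(1/2)\rfloor=-1$, so for any integer level $i$ that the lemma cares about, every level-$i$ edge of $\gtau$ is an edge of $E$ (it must come from $\lighttau$, since $\gtau$ contains no other $E$-edges), and it has level $i$ in $G$ as well (light edges retain their original weights). Thus it suffices to count the $\tau$-light edges of $G$ whose level is exactly $i$.

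The main step is a simple degree-counting argument driven by the definition of $\cutoff_\tau$. Let $(u,v)$ be any $\tau$-light edge at level $i$. By the definition of $\tau$-light, $\level(u,v)=i$ strictly exceeds both $\cutoff_\tau(u)$ and $\cutoff_\tau(v)$. Since $\cutoff_\tau(v)$ is the \emph{maximum} index $j$ for which $|I_j(v)|\geq\tau 2^j$, the inequality $\cutoff_\tau(v)<i$ forces $|I_i(v)|<\tau 2^i$; otherwise $i$ itself would be a valid index and $\cutoff_\tau(v)$ would be at least $i$, a contradiction. In particular every vertex incident to at least one light level-$i$ edge has fewer than $\tau 2^i$ edges of level $\leq i$ incident to it, and hence fewer than $\tau 2^i$ light level-$i$ edges. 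Summing this bound over all $v\in V$ and dividing by $2$ (each edge is counted at both endpoints) yields at most $n\tau 2^i/2 = O(n\tau 2^i)$ light level-$i$ edges, as required.

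There is no real obstacle: the lemma unpacks the definition of $\cutoff_\tau$ almost directly. The only mild subtlety is to separate the virtual weight-$1/2$ edges from the positive-level edges, so that the two bounds can be established independently, which is exactly how the lemma is stated.
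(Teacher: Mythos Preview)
Your proof is correct and follows essentially the same argument as the paper: a level-$i$ edge in $\gtau$ must be $\tau$-light, hence both endpoints have $\cutoff_\tau < i$, which by the maximality in the definition of $\cutoff_\tau$ forces $|I_i(v)| < \tau 2^i$ at each endpoint; summing over vertices gives the bound. The paper's proof is a terser version of exactly this, and your added remarks (that the virtual edges sit at level $-1$, and the factor of $2$ from double-counting) only make the argument more explicit.
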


\begin{proof}
An edge $(u,v) \in E$ of level $i$ is only in $\etau$ 
if $(u,v)$ is $\tau$-light, i.e. 
if $\cutoff(u) < i$ AND $\cutoff(v) < i$. 
But a vertex with $\cutoff$ less than $i$ has by definition less than
$2^i \tau$ incident edges of level $i$. 
\end{proof}

The following Lemma is the weighted analogue of Lemma 4.4 in \cite{BernsteinC16},
but the most direct extension of that proof to the weighted case would
analyze each of the $\log(W)$ edge levels separately, and thus 
multiply the error by $\log(W)$, eventually leading to a $\log^3(W)$ factor
in the total running time of the algorithm. To avoid this extra error, we need a somewhat
more sophisticated analysis than in Lemma 4.4 of \cite{BernsteinC16}

\begin{lemma}
\label{lem:gtau-approx}
For any graph $G = (V,E)$, any positive integer threshold $\tau$,
and any pair of vertices $s,t \in V$:
$$\dtau(s,t) \leq \dist(s,t) < \dtau(s,t) + \frac{14n}{\tau}$$.
\end{lemma}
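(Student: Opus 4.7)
For the first inequality $\dtau(s,t) \leq \dist(s,t)$, the plan is to take the shortest $s$-$t$ path $\pi$ in $G$ and translate it into a walk in $\gtau$ of no greater weight: all light edges of $\pi$ lie in $\etau$ and are kept; every maximal heavy sub-path of $\pi$, say from $v$ to $u$, lies entirely inside one connected component $C$ of $\ghtau$, and is replaced by the two-edge detour $v \to c_C \to u$ of weight $1$. Since the minimum edge weight is at least $1$, any heavy sub-path has weight at least $1$, so this substitution never increases the total weight.

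For the second inequality $\dist(s,t) < \dtau(s,t) + 14n/\tau$, the plan is reversed: take $\pitau(s,t)$ in $\gtau$ and expand each detour into a walk in $G$. Because $\pitau$ is a simple path, each component-vertex $c_C$ appears at most once, so the heavy components $C_1, \ldots, C_r$ that the detours traverse are pairwise distinct and hence $\sum_i |C_i| \leq n$. For each detour $v_i \to c_{C_i} \to u_i$ I would substitute the shortest $v_i$-$u_i$ walk $W_i$ inside the subgraph $\ghtau$ restricted to $C_i$; the resulting $s$-$t$ walk in $G$ has weight $\dtau(s,t) - r + \sum_i |W_i|$, so the inequality reduces to showing $\sum_i (|W_i| - 1) < 14n/\tau$.

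The structural heart of the argument is that the weighted diameter of any heavy component $C$ is $O(|C|/\tau)$. This rests on the following consequence of Definition~\ref{def:threshold}: if an edge $(a,b)$ is heavy at level $\ell$, then some endpoint, say $a$, has $\cutoff(a) \geq \ell$ and therefore at least $\tau 2^\ell$ incident edges of level at most $\ell$; all of these are heavy, since each of their levels is at most $\cutoff(a)$, and so all their other endpoints lie in $C$. In particular $|C| \geq \tau 2^\ell$, so every heavy edge of $C$ has weight at most $2|C|/\tau$, and each such edge comes with $\Omega(\tau 2^\ell)$ distinct ``witness'' vertices in $C$. Paying for each unit of weight of the inner $v_i$-$u_i$ walk by such witnesses yields the diameter bound $|W_i| = O(|C_i|/\tau)$, and summing over the disjoint $C_i$ gives $\sum_i (|W_i|-1) \leq 14n/\tau$ with appropriate constants.

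The main obstacle, explicitly flagged by the authors, is that a naive level-by-level analysis loses a factor of $\log W$: the ``every few high-cutoff vertices on a shortest path have disjoint level-$\leq \ell$ neighborhoods'' argument of \cite{BernsteinC16}, applied separately at each level $\ell$, bounds the level-$\ell$ contribution to the inner diameter by $O(|C|/\tau)$ but sums to $O(|C|\log(W)/\tau)$ across the $\Theta(\log W)$ levels and would inflate the running time by an extra $\log W$. Removing this factor requires a single \emph{global} charging scheme in which each vertex of $C$ pays for only $O(1)$ weight across all levels; the shortest-path optimality of the inner walk is what prevents a single vertex from serving as a witness at two very different levels, since such reuse would produce a shortcut of weight comparable to the smaller level and contradict optimality across the higher-weight segment. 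Establishing this global (rather than per-level) disjointness is the main technical delicacy of the proof.
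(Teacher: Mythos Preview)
Your plan for the first inequality matches the paper's. For the second inequality the paper takes a structurally different route: rather than bounding the diameter of each heavy component separately and summing, it forms one auxiliary graph $\gstar = (V, \heavytau \cup (\lighttau \cap \pitau(s,t)))$, takes the shortest $s$--$t$ path $\pistar$ in $\gstar$, and bounds the total heavy weight on $\pistar$ globally. The $\log W$-free charging that you correctly flag as the crux is carried out as follows: order the heavy-incident vertices of $\pistar$ by decreasing $\cutoff$, greedily extract a set $\indep$ whose balls $\ballone(v)=\{v\}\cup\{u:\level(u,v)\le\cutoff(v)\}$ are pairwise disjoint (so $\sum_{v\in\indep}2^{\cutoff(v)}\le n/\tau$), and then use shortest-path optimality of $\pistar$ to show each $v\in\indep$ is responsible for at most $14\cdot 2^{\cutoff(v)}$ heavy weight (any $u$ with $\cutoff(u)\le\cutoff(v)$ whose ball meets $\ballone(v)$ must lie within $4\cdot 2^{\cutoff(v)}$ of $v$ along $\pistar$, else the two-hop path through $\ballone(v)\cap\ballone(u)$---which consists of heavy edges and hence lies in $\gstar$---would beat $\pistar$). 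Your per-component diameter claim is in fact true and provable by exactly this greedy argument restricted to a shortest path inside one component (since every $\ballone(v)\subseteq C$, the disjointness bound becomes $|C|/\tau$); so your approach is a valid and arguably more modular repackaging of the same technique, with the advantage that ``heavy-component diameter $\le O(|C|/\tau)$'' is a clean standalone statement. One small slip to fix: from $\cutoff(a)\ge\ell$ you get $\ge\tau\,2^{\cutoff(a)}$ incident edges of level at most $\cutoff(a)$, not necessarily of level at most $\ell$; this is harmless, as those edges are still heavy and still give $|C|\ge\tau\,2^{\ell}$, but your witness weights are bounded by $2\cdot 2^{\cutoff(a)}$ rather than $2\cdot 2^{\ell}$, which is exactly why the charging must be organised by $\cutoff$ of the dominant endpoint rather than by the level of the edge.
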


\newcommand{\lpi}{L_{\pitau}}

\begin{proof}
It is not hard to see that we have $\dtau(s,t) \leq \dist(s,t)$.
Simply consider the shortest $s-t$ path $\pi(s,t) \in G$.
All the edges in $\pi(s,t) \bigcap \lighttau$ are also in $\gtau$.
Otherwise, for any $\tau$-heavy edge $(u,v)$ on $\pi(s,t)$,
recall that the weight $w(u,v)$ in $G$ is always at least $1$,
and note that because edge $(u,v)$ is $\tau$-heavy, the vertices $u$ and $v$ must
be in the same connected component in $\ghtau$,
and so there is a path of length $1$ from $u$ to $v$ in $\gtau$:
the edge of weight $1/2$ from $u$ to the component vertex $c$,
and then a second edge from $c$ to $v$.

We now prove that $\dist(s,t) \leq \dtau(s,t) + \frac{14n}{\tau}$.
Let $\pitau(s,t)$ be the shortest $s-t$ path in $\gtau$.
Consider the edge set $\estar \subseteq E$ which contains all the $\tau$-light
edges of $\pitau(s,t)$ (these edges are also in $E$), and all $\tau$-heavy edges in $E$.
That is, $\estar = \heavytau \bigcup (\lighttau \bigcap \pitau(s,t))$.
Let $\gstar = (V, \estar)$.
We first observe that there exists an $s-t$ path in $\gstar$.
We construct this path by following $\pitau(s,t)$.
$\pitau(s,t)$ contains $\tau$-light edges, which are by definition in $\gstar$, 
as well as subpaths of length $2$ of the form $(v,c) \circ (c,w)$,
where $v$ and $w$ are in the same connected component $C$ in $\ghtau$.
But since $v$ and $w$ are in the same connected component,
there is a path of only heavy edges connecting them,
and all heavy edges are in $\gstar$.

Now, let $\pistar(s,t)$ be the \emph{shortest} $s-t$ path in $\gstar$.
Let $\dstar(s,t)$ be the length of $\pistar(s,t)$.
Since $\gstar$ is a subgraph of $G$, we know that
$\dist(s,t) \leq \dstar(s,t)$. We now show that
\begin{equation}
\label{eq:dstar-upper}
\dstar(s,t) < \dtau(s,t) + \frac{14n}{\tau}
\end{equation}
which will complete the proof of the lemma.

We prove Equation \ref{eq:dstar-upper} by showing that
all the heavy edges on $\pistar(s,t)$ have total weight at most $\frac{14n}{\tau}$:
since all the light edges on $\pistar(s,t)$ are by definition also on $\pitau(s,t)$,
this proves the equation.

Let $\vstarh$ be the set of vertices on $\pistar(s,t)$ that
are incident to at least one heavy edge on $\pistar(s,t)$.
Now, for any vertex $v \in \vstarh$,
let $\ballone(v)$ contain $v$ and all vertices $u \in V$, such that
there is an edge $(u,v)$ with $\level(u,v) \leq \cutoff(v)$.
Note that by definition of $\cutoff(v)$, we have $|\ballone(v)| \geq 2^{\cutoff(v)} \tau$. 

Now, let $P$ be an ordering of the vertices in $\pistar(s,t)$ in non-increasing order
of $\cutoff$: so if $i < j$ then $\cutoff(P[i]) \geq \cutoff(P[j])$. 
We will now construct a set $\indep$ of \emph{independent} vertices in $\vstarh$
as follows. We start with $\indep$ empty. Now, go through the vertices in $\vstarh$ according to their order in $P$ (so in non-increasing order of cut-off),
and for each $v$ do the following: if $\ballone(v)$ is disjoint
from $\ballone(u)$ for every $u \in \indep$, add $v$ to $\indep$.
Clearly if $v$ and $w$ are independent 
then $\ballone(v)$ and $\ballone(w)$ are disjoint.
But then recalling that for any $v$, $|\ballone(v)| \geq 2^{\cutoff(v)}\tau$,
and that the total number of vertices in the graph is $n$,
we have

\begin{equation}
\label{eq:sum-cutoffs}
\sum_{v \in \indep} 2^{\cutoff(v)} \leq n/\tau
\end{equation}

We will end up showing that each independent vertex $v$ 
is "responsible" for at most $14 \cdot 2^{\cutoff(v)}$ weight
from heavy edges on $\pistar(s,t)$, which will complete our proof.
We say that a non-independent vertex $u$ belongs to independent vertex $v$
if $\cutoff(v) \geq \cutoff(u)$ and $\ballone(u)$ and $\ballone(v)$ are non-disjoint. 
By our independence construction, every non-independent vertex $u$ 
belongs to one or more independent vertices.

\begin{claim}
\label{claim:belonging-distance}
If $v$ is independent,
and $u$ belongs to $v$, then the distance from $v$ to $u$
along $\pistar(s,t)$ is at most $4 \cdot 2^{\cutoff(v)}$.
\end{claim}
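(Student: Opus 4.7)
The plan is to exploit the overlap between $\ballone(u)$ and $\ballone(v)$ to construct a short two-edge path of heavy edges connecting $u$ and $v$ in $\gstar$, and then invoke the optimality of $\pistar(s,t)$ to transfer this length bound to the actual subpath along $\pistar(s,t)$.

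First, since $u$ belongs to $v$, the sets $\ballone(u)$ and $\ballone(v)$ intersect, so pick any $x$ in the intersection. By the definition of $\ballone$, either $x = v$ or there is an edge $(x,v) \in E$ with $\level(x,v) \leq \cutoff(v)$; similarly either $x = u$ or there is an edge $(x,u) \in E$ with $\level(x,u) \leq \cutoff(u)$. Any such edge is $\tau$-heavy because one of its endpoints has cut-off at least the edge's level, so both edges lie in $\estar$. Moreover, by the definition $\level(e) = \lfloor \log(w(e)) \rfloor$, an edge whose level is at most $j$ has weight strictly less than $2^{j+1}$. Combining this with $\cutoff(u) \leq \cutoff(v)$, each of the (at most two) edges in our path has weight strictly less than $2^{\cutoff(v)+1}$, so the concatenated $v$-to-$u$ path in $\gstar$ has total weight strictly less than $2 \cdot 2^{\cutoff(v)+1} = 4 \cdot 2^{\cutoff(v)}$ (the degenerate cases $x = v$ or $x = u$ only make the bound looser).

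Finally, since $\pistar(s,t)$ is a shortest $s$-$t$ path in $\gstar$, the subpath along $\pistar(s,t)$ between any two of its vertices is itself a shortest path in $\gstar$ between those vertices; otherwise, splicing in a strictly shorter $v$-to-$u$ path would produce a strictly shorter $s$-$t$ path, contradicting the optimality of $\pistar(s,t)$. Hence the distance along $\pistar(s,t)$ from $v$ to $u$ equals $\dstar(v,u)$, which is at most the length of the path we constructed, namely at most $4 \cdot 2^{\cutoff(v)}$.

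The argument is essentially a weight-accounting using the level bound, so there is no real analytic obstacle; the one step that deserves care is the observation that the ball-overlap shortcut actually lives inside $\gstar$ (which holds precisely because $\estar$ contains all $\tau$-heavy edges of $G$, not merely those appearing on $\pitau(s,t)$), and that the shortest-path property of $\pistar$ lets us compare the subpath length to $\dstar(v,u)$ rather than just to some weaker upper bound.
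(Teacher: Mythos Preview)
Your proof is correct and follows essentially the same approach as the paper's: pick a vertex in $\ballone(u)\cap\ballone(v)$, observe that the two incident edges are $\tau$-heavy (hence in $\gstar$) with weight below $2\cdot 2^{\cutoff(v)}$ each, and then use the optimality of $\pistar(s,t)$ to bound the subpath length. The only cosmetic difference is that the paper phrases this as a proof by contradiction, whereas you argue directly and are slightly more explicit about the degenerate cases and about why the shortcut lies in $\gstar$.
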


\begin{proof}
Say, for contradiction, that this was not the case.
Then, since $u$ belongs to $v$, there must be some 
vertex $z \in \ballone(v) \bigcap \ballone(u)$.
But by definition of $\ballone$ we have $\level(v,z) \leq \cutoff(v)$
and $\level(u,z) \leq \cutoff(u) \leq \cutoff(v)$, so by definition of $\level$
there is a path from $u$ to $v$ through $z$ of weight at most $4 \cdot 2^{\cutoff(v)}$,
which contradicts $\pistar(s,t)$ being the shortest path in $\gstar$.
\end{proof}

Now, given any $\tau$-heavy edge $(x,y)$, let the dominant endpoint of 
$(x,y)$ be the endpoint that comes earlier in $P$:
so in particular, if $x$ is dominant then $\cutoff(x) \geq \cutoff(y)$.
We say that a $\tau$-heavy edge $(x,y)$ belongs to independent vertex $v$ if
the dominant endpoint of $(x,y)$ belongs to $v$.
Note that every $\tau$-heavy edge in $\pistar(s,t)$
belongs to at least one independent vertex.
We now show that if $v$ is independent, 
then the total weight of $\tau$-heavy edges that belong to $v$
is at most $14 \cdot 2^{\cutoff(v)}$. 
Otherwise (for contradiction), at least half that weight
would come before or after $v$, so w.l.o.g, let us say that there is at
least $7 \cdot 2^{\cutoff(v)}$ weight of edges that belong to $v$ 
and come after $v$ on $\pistar(s,t)$ -- i.e., are closer to $t$ than $v$ is.
Let us consider the $\tau$-heavy edge $(x,y)$ that belongs to $v$ and is closest to $t$,
and say that $y$ is closer to $t$ than $x$.
Note that the distance from $v$ to $y$ along $\pistar(s,t)$ is at least 
$7 \cdot 2^{\cutoff(v)}$, so by Claim \ref{claim:belonging-distance}, $y$ cannot
belong to $v$. Thus, since edge $(x,y)$ belongs to $v$, $x$ must
be the dominant endpoint and belong to $v$. But 
if $x$ belongs to $v$ and dominates $y$ then $\cutoff(y) \leq \cutoff(x) \leq \cutoff(v)$,
and the only way edge $(x,y)$ could be $\tau$-heavy is if $\level(x,y) \leq \cutoff(v)$,
so $w(x,y) \leq 2 \cdot 2^{\cutoff(v)}$, so the distance from $v$ to $x$ along $\pistar(s,t)$
is still at least $7 \cdot 2^{\cutoff(v)} - 2 \cdot 2^{\cutoff(v)} = 5 \cdot 2^{\cutoff(v)}$,
which again contradicts Claim \ref{claim:belonging-distance}. Thus, the total
weight of heavy edges on $\pistar(s,t)$ is at most $14\sum_{v \in \indep} 2^{\cutoff(v)}$,
which combined with Equation \ref{eq:sum-cutoffs} proves Equation \ref{eq:dstar-upper},
which proves the lemma.
\end{proof}

Maintaining the threshold graph $\gtau$ as $G$ changes is easy, because the only hard part 
is maintaining the connected components $C$ in $\ghtau$,
and dynamic connectivity in undirected graphs is a well-studied problem
that admits deterministic $O(\log^2(n))$ update time. The details
are essentially identical to those for unweighted graphs (see Lemmas 4.5 and 4.6 in \cite{BernsteinC16}), so we omit the proofs of the following two lemmas.

\begin{lemma}
\label{lem:gtau-maintain}
Given a graph $G$ subject to a decremental update sequence,
and a positive integer threshold $\tau$,
we can maintain the graph $\gtau$ in \emph{total} time
$O(m\log^2(n))$.
Moreover, the total number of edges of weight $1/2$ ever inserted into
the graph is $O(n\log(n))$,
and the total number of edges $(u,v)$ that have $\level(u,v) = i$ 
when inserted is at most $n \tau 2^i$.
\end{lemma}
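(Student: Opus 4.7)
The plan is to mirror the argument for unweighted graphs in \cite{BernsteinC16}, exploiting the strong monotonicity properties that the decremental weighted setting still enjoys. The first observation is that since edge weights in $G$ only increase and edges are only deleted, each value $\level(u,v)$ is monotonically non-decreasing; and, for every fixed $v$ and $i$, the set $I_i(v)$ is monotonically shrinking, because an edge can leave $I_i(v)$ (by weight-increasing past level $i$ or by being deleted) but can never enter it. Consequently $\cutoff(v)$ is monotonically non-increasing, and the predicate ``$\level(u,v) \leq \cutoff(v)$'' can only flip from true to false. This gives the crucial structural property that once an edge becomes $\tau$-light it stays $\tau$-light forever, so $\ghtau$ is subject only to edge deletions (from deletions in $G$ together with at most one heavy-to-light transition per edge).

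To maintain the connected components of $\ghtau$, I would plug in the deterministic dynamic connectivity algorithm of Holm, Lichtenberg, and Thorup, which handles each update in $O(\log^2(n))$ amortized time while maintaining a spanning forest. On each deletion or weight increase in $G$ I would update the affected counts $|I_i(\cdot)|$, test whether $\cutoff$ drops at either endpoint, and feed every resulting heavy-to-light transition into the connectivity structure as an edge deletion. Since each original edge of $G$ is deleted at most once and each edge undergoes at most one heavy-to-light transition, the total number of events fed into the data structure is $O(m)$, giving the $O(m \log^2(n))$ bound on total connectivity work (with the per-update bookkeeping for weight changes handled by the $O(1)$ per-update allowance in Theorem \ref{thm:main}).

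The auxiliary weight-$1/2$ edges are maintained by a standard smaller-half charging argument. Represent each connected component of $\ghtau$ by one component vertex in $\vtau$, to which every original vertex of the component keeps a single weight-$1/2$ edge. When a deletion splits a component $C$ into $C_1$ and $C_2$, I would identify the smaller side by running BFS in $\ghtau$ from the two endpoints of the deleted tree edge in parallel, alternating one step at a time and halting the moment one side is exhausted; that side is the smaller one. Keep the larger side's component vertex untouched and create a fresh component vertex only for the smaller side, inserting new weight-$1/2$ edges only for its vertices. Each vertex can lie in the smaller half at most $\log(n)$ times because its component size at least halves each such event, yielding a total of $O(n \log(n))$ weight-$1/2$ edge insertions; the interleaved BFS work is proportional to the smaller side and hence also absorbed into the same $O(m \log(n))$ budget.

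Finally, the level-$i$ insertion bound follows directly from monotonicity. At the instant an edge $(u,v)$ with $\level(u,v) = i$ is first inserted as a light edge, both endpoints have $\cutoff < i$, so in particular $|I_i(v)| < \tau \cdot 2^i$ at that moment. Because $I_i(v)$ only shrinks over time, every edge ever inserted at level $i$ with endpoint $v$ must already sit inside $I_i(v)$ at the first moment that $\cutoff(v) < i$, so at most $\tau \cdot 2^i$ such edges exist per vertex $v$; summing over all $n$ vertices and dividing by $2$ for double-counting of endpoints bounds the total by $n \tau \cdot 2^i$. The main obstacle in the overall argument is making sure that smaller-side detection and $\cutoff$ recomputation do not blow up the claimed $O(m \log^2(n))$ total time, but the monotonicity of both $\cutoff$ and of the heavy/light status eliminates any global $\log(W)$ overhead, and the parallel-BFS charging handles the split work cleanly.
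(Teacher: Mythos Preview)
Your proposal is correct and follows exactly the approach the paper intends: the paper omits the proof entirely, pointing to Lemmas~4.5 and~4.6 of \cite{BernsteinC16} for the unweighted analogue, and your argument --- monotonicity of $\cutoff$ and of heavy/light status, Holm--de Lichtenberg--Thorup decremental connectivity on $\ghtau$, and the smaller-half relabeling trick for the weight-$1/2$ component edges --- is precisely the natural weighted adaptation of that argument. One minor remark: identifying the smaller side via parallel traversal of the spanning forest (rather than BFS in $\ghtau$) is the cleaner way to get the $O(n\log n)$ relabeling bound, though your edge-based BFS also works since a vertex on the edge-smaller side has its component's edge count halve, yielding $O(\log m)=O(\log n)$ relabelings per vertex.
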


\begin{lemma}
\label{lem:gtau-decremental}
Given a graph $G$ subject to a decremental update sequence,
a positive integer threshold $\tau$,
and a pair of vertices $u,v$ in $G$,
the distance $\dtau(u,v)$ in $\gtau$
never decreases as the graph $G$ changes.
\end{lemma}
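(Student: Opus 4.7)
The plan is to reduce the monotonicity claim to a case analysis over the atomic modifications that a single update to $G$---an edge deletion or a unit weight increase---can induce in $\gtau$. First I would observe that both types of update to $G$ are themselves monotone in the relevant sense: neither can ever increase $|I_i(v)|$ for any vertex $v$ and any level $i$, so $\cutoff_\tau(v)$ is a nonincreasing function of time for every $v$. Consequently the set $\heavytau$ can only lose members over time: either an edge is deleted from $G$, or some endpoint's cutoff drops below the edge's level and the edge becomes $\tau$-light.

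From this I would extract four atomic change types for $\gtau$: (i) a $\tau$-light edge in $\etau$ has its weight increased; (ii) a $\tau$-light edge is deleted from $\etau$; (iii) a currently $\tau$-heavy edge $(u,v)$ becomes $\tau$-light, so it is removed from $\ghtau$ and inserted into $\etau$ with its current weight $w(u,v)\geq 1$; and (iv) a $\tau$-heavy edge is deleted from $\ghtau$ outright. Any update to $G$ decomposes into a sequence of changes of these four types. Types (i) and (ii) trivially cannot decrease any distance in $\gtau$. The nontrivial cases are (iii) and (iv), both of which remove an edge from $\ghtau$ and may split some connected component $C$ into two pieces $C_1, C_2$, replacing the old component vertex $c$ in $\vtau$ with two new component vertices $c_1, c_2$ and their incident weight-$1/2$ edges.

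To handle (iii) and (iv), I would fix a pair $u,v \in V$ and an arbitrary $u$-to-$v$ walk $W^{\mathrm{new}}$ in the post-change $\gtau$, and construct a $u$-to-$v$ walk $W^{\mathrm{old}}$ in the pre-change $\gtau$ of total weight at most that of $W^{\mathrm{new}}$, edge by edge. Any edge that survives from the old $\gtau$ with at most the same weight is copied directly. Any newly inserted edge $(x,c_i)$ of weight $1/2$ incident to a fresh component vertex is replaced by $(x,c)$, which exists in the old $\gtau$ because $x \in C_1 \cup C_2 \subseteq C$ and also has weight $1/2$. In case (iii), the newly light edge $(u,v)$ is replaced by the two-edge path $u \to c \to v$ through the old component vertex containing both endpoints, whose total weight is $1 \leq w(u,v)$ since every original weight in $G$ is at least $1$. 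Minimizing over $W^{\mathrm{new}}$ then shows that the $u$-to-$v$ distance in the pre-change $\gtau$ is at most the $u$-to-$v$ distance in the post-change $\gtau$.

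The main obstacle I anticipate is that a single $G$-update can trigger many atomic changes at once: one deletion can drop the cutoffs of both endpoints, lightening several other edges and potentially splitting the same component repeatedly. I would address this by applying the atomic changes sequentially in any fixed order (say, all type-(i) and type-(ii) changes first, then type-(iii) transitions one at a time, then any remaining type-(iv) deletions), invoking the walk-transformation argument after each step and chaining via transitivity of the ``distances do not decrease'' relation. Since each transformation is purely local and does not depend on what happens afterward, the argument is order-independent and completes the proof.
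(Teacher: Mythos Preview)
Your argument is correct. The paper actually omits the proof of this lemma entirely, deferring to Lemmas~4.5 and~4.6 of \cite{BernsteinC16} for the unweighted analogue, so there is no in-paper proof to compare against; your decomposition into atomic changes to $\gtau$ (light-edge weight increase or deletion, heavy edge turning light, heavy edge deletion) together with the walk-transformation that replaces each use of a new component edge $(x,c_i)$ by $(x,c)$ and each use of a newly light edge $(u,v)$ by the weight-$1$ path $u\to c\to v$ is exactly the natural argument and matches what the cited reference does in the unweighted setting.
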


%%%%%%%%%%%%%%%%%%%%%%%%
\section{An Extension of the Even and Shiloach Algorithm}
%%%%%%%%%%%%%%%%%%%%%%%%

In the paper of Bernstein and Chechik for unweighted graphs \cite{BernsteinC16},
the threshold graph ended up being sparse, 
and so it was easy to efficiently compute distances within it
using the standard Even and Shiloach algorithm with total update time $O(mn)$ \cite{EvenS81}.
In our paper, however, the threshold graph is only sparse
with respect to low-weight edges; 
that is, because heaviness and lightness is defined in terms of edge weights,
the threshold graph can have many edge of high weight. We
now present a modification of the Even and Shiloach algorithm that can deal
more efficiently with edges of high weight,
which makes it perfectly suited to our threshold graph,
which has mostly high-weight edges. 
Like the standard Even and Shiloach algorithm,
our extension runs up to a certain depth bound $d$.

\begin{definition}
\label{def:bound}
Given any number $d$, the function $\bound{d}(x)$ is equal to $x$
if $x \leq d$, and to $\infty$ otherwise.
\end{definition}

Note that 
to apply to the threshold-graph $\gtau$, the
algorithm must be able to handle insertions,
as long as they do not change distances.
(See Lemmas \ref{lem:gtau-maintain}, \ref{lem:gtau-decremental}).
To this end we need the following definition:

\begin{definition}
\label{def:matha}
Let $G = (V,E)$ be a graph with positive weights $\geq 1$,
subject a dynamic update sequence of 
insertions, deletions, and weight-increases. In such a context, let
$\matha$ contain the set of ALL edges $(u,v)$ to appear
in $G$ at any point during the update sequence ($\matha$ for all): if 
an edge $(u,v)$ is deleted and then inserted again,
we consider this as two separate edges in $\matha$.  
For every edge $(u,v) \in \matha$, let
$\oweight(u,v)$ (o for original) be the weight of $(u,v)$ when it first appears in $\matha$ 
(this might be in the original graph itself),
and let $\olevel(u,v) = \floor{\log(\oweight(u,v))}$.
\end{definition} 

\begin{lemma}
Let $G = (V,E)$ be an undirected graph with positive integer edge weights,
$s$ a fixed source, and $d \geq 1$ a depth bound. Say that $G$ is subject
to a dynamic sequence of edge insertions, deletions, and weight increases,
with the property that distances in $G$ never decrease as a result of an update.
Then, there exists an algorithm $\wses(G,s,d)$ (stands for weight-sensitive Even and Shiloach)
that maintains approximate distances $\dist'(s,v)$
with $\dist(s,v) \leq \dist'(s,v) \leq \bound{d}(\dist(s,v)) \oeps$,
and has total update time 
$O(nd + \sum_{(u,v) \in \matha} \frac{d}{\eps \oweight(u,v)})$.  
\end{lemma}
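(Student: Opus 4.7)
The plan is to generalize the classical Even and Shiloach (ES) sweep so that an edge $e$ of large weight is re-examined only rarely. First I would maintain, for each vertex $v$, an estimate $\dist'(s,v)$ with $\dist(s,v)\leq\dist'(s,v)$ together with its integer \emph{coarse level} $\ell(v)=\lfloor\dist'(s,v)\rfloor$, used for the standard global bucketing of vertices by level. For each edge $e=(u,v)\in\matha$ currently present in $G$ I would keep two lazily-updated checkpoints $c_e(u),c_e(v)$ satisfying
\[\dist'(s,u)\leq c_e(u)\leq\dist'(s,u)+\tfrac{\eps}{4}\oweight(e), \qquad \dist'(s,v)\leq c_e(v)\leq\dist'(s,v)+\tfrac{\eps}{4}\oweight(e),\]
refreshed only once an endpoint's estimate has crept upward by $\Omega(\eps\oweight(e))$. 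Finally, at each $v$ I would keep a bucketed structure $B_v$ indexed by the integer parts of the contributions $c_e(u)+w(e)$ across incident edges, and a secondary bucket-queue at each $u$ listing its incident edges by the next value of $\dist'(s,u)$ at which the edge's checkpoint is due; this ensures $u$'s per-level-increase work is proportional only to the edges actually being refreshed at that instant, not to $\deg(u)$.

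The main loop would mimic ES. When $\dist'(s,v)$ is falsified (by an edge deletion, weight increase, or an incoming checkpoint refresh) I scan $B_v$ upward from its current lowest bucket, refreshing invalidated contributions as they are popped, until the first still-valid contribution supplies the new $\dist'(s,v)$. Any crossing of an integer boundary by $\dist'(s,v)$ triggers a constant-time move in the global vertex-level bucket; any crossing of a per-incident-edge refresh threshold schedules a checkpoint refresh at the corresponding neighbor. The lower bound $\dist(s,v)\leq\dist'(s,v)$ is inductive: an initial truncated Dijkstra establishes $\dist'=\dist$, and estimates are only raised thereafter in response to genuine distance increases. The upper bound follows by telescoping along the true shortest $s$--$v$ path $(v_0,\ldots,v_k=v)$: the contribution stored at $v_i$ coming from $v_{i-1}$ is at most
\[\dist'(s,v_{i-1})+\tfrac{\eps}{4}\oweight(v_{i-1},v_i)+w(v_{i-1},v_i) \leq \dist'(s,v_{i-1})+(1+\tfrac{\eps}{4})w(v_{i-1},v_i),\]
where I use $\oweight(e)\leq w(e)$ since weights only increase. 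Summing over $i$ and absorbing the unit integer-rounding slack into the multiplicative error yields $\dist'(s,v)\leq(1+\eps)\dist(s,v)$ whenever $\dist(s,v)\leq d$.

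For the running time, each vertex's coarse level $\ell(v)$ can advance through at most $d$ unit steps, each handled in $O(1)$ amortized by the global level-bucket, contributing the $O(nd)$ term. Each edge $e\in\matha$ triggers a checkpoint refresh at most $O(d/(\eps\oweight(e)))$ times across its lifetime, since each refresh advances the relevant checkpoint by $\Omega(\eps\oweight(e))$ and estimates never exceed $O(d)$; each refresh is $O(1)$ amortized via $B_v$ and the secondary refresh queues, yielding the summed edge term. The subtlest step I anticipate is handling insertions, which the lemma permits provided they do not decrease any distance. When an edge $e=(u,v)$ of original weight $w$ is inserted, the existing estimates $\dist'(s,u)$ and $\dist'(s,v)$ may already differ by far more than $w$; I must initialize $c_e(u)=\dist'(s,u),\,c_e(v)=\dist'(s,v)$ and splice $e$ into $B_u,B_v$ without lowering any estimate already in force. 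The no-decrease hypothesis is precisely what ensures the current estimates remain valid upper bounds after the insertion, and is also why the running-time sum is taken over $\matha$ rather than over only the edges of the original graph, since an edge that is inserted and later deleted must still be charged independently.
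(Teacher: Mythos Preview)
Your proposal is correct and follows essentially the same approach as the paper: your ``checkpoints'' $c_e(u)$ are exactly the paper's lazily-maintained local labels $\delta_v(u)$ with slack $\Theta(\eps\,\oweight(u,v))$, your bucket structures play the same role as the paper's per-vertex buckets indexed by the rounding function $\round{\eps\oweight(u,v)}{\cdot}$, the telescoping approximation argument is identical, and your handling of insertions by refusing to lower any estimate is precisely the ``monotone ES'' trick the paper invokes (citing Henzinger--Krinninger--Nanongkai). The only cosmetic differences are terminology and your additional integer coarse-level rounding, which the paper does not need since it assumes integer weights from the start.
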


\begin{proof}
Throughout the proof of this lemma, we will often rely on the following function:

\begin{definition} 
\label{def:additive-rounding}
Given any positive real numbers $\beta$ and  $x$, let 
$\round{\beta}{x}$ be the smallest number $y > x$ that is an integer multiple of $\beta$.
\end{definition}
 
Our algorithm $\wses$ follows the basic procedure of the classic Even and Shiloach algorithm
\cite{EvenS81} (denoted $\es$), with a few modifications. 
For this reason we can describe many of the guarantees
of $\es$ without looking under the hood for exactly how they are implemented,
since we keep exactly the same implementation. We only modify the classic algorithm in a few key points.
Recall the definition of $\matha$ and $\oweight(u,v)$ from Definition \ref{def:matha}. 
Since weights only increase, we always have $w(u,v) \geq \oweight(u,v)$.

The $\es$ algorithm maintains for each vertex $v$ a distance label $\del(v)$ with the property
that $\dist(s,v) = \del(v)$. In particular, it maintains the invariant that $\del(s) = 0$, and 
that if $(u,v) \in E$, then $\del(v) \leq \del(u) + w(u,v)$. Our algorithm
will also maintain a label $\del(v)$ for every $v$, but since we only need
an approximation, we guarantee a weaker invariant:

\begin{invariant} ({\bf Approximation})
We always have $\del(s) = 0$. If $p$ is the parent of $v$ in the tree then 
$\del(v) \geq \del(p) + w(p,v)$.
Also, for all $(u,v) \in E$, $\del(v) \leq \del(u) + w(u,v) + \eps \oweight(u,v)$.
\end{invariant}

\noindent {\bf Approximation Analysis:}
We now show that as long as the Approximation Invariant is preserved, we always have 
$\dist(s,v) \leq \del(v) \leq (1 + \eps) \dist(s,v)$.
The fact that $\del(v) \geq \dist(s,v)$ follows trivially for the same reason as in classic $\es$.
To prove the other side, consider the shortest path $\pi(s,v)$ in $G$.
Let the vertices in $\pi(s,v)$ be $s_0, s_1, ..., s_q = v$. We show by induction
that $\del(s_i) \leq (1 + \eps) \dist(s,s_i)$. The claim is trivially true for $i=0$
because $\del(s) = 0$ at all times. Now, say the claim is true for $s_i$,
and note that $\dist(s,s_{i+1}) = \dist(s,s_i) + w(s_i, s_{i+1})$.
Now, by the Approximation Invariant, we have 
$\del(s_{i+1}) \leq \del(s_i) + w(s_i, s_{i+1}) + \eps \oweight(s_i,s_{i+1}) \leq 
\del(s_i) + (1 + \eps) w(s_i, s_{i+1})$ 
(the last step follows from $w(u,v) \geq \oweight(u,v)$).
But then by the induction hypothesis we have 
$\del(s_{i+1}) \leq (1 + \eps)(\dist(s,s_i) + w(s_i, s_{i+1})) = (1 + \eps) \dist(s,s_{i+1})$.

We must now show how to efficiently maintain the Approximation Invariant. The classic $\es$ algorithm
does $O(1)$ time per update for basic setup, and otherwise all the work comes 
from charging constant time operations to various edges $(u,v)$ and vertices $v$. 
The update time invariant in classic $\es$ is as follows: we
only charge vertex $v$ when $\del(v)$ increases, 
and we only charge edge $(u,v)$ when $\del(u)$ or $\del(v)$ increases.
Since weights are positive integers, every $\del(u)$ must increase by at least $1$,
and since we only go up to distance $d$, each $\del(v)$ increases at most $d$ times, 
which yields $O(md)$. To improve upon this, our algorithm 
maintains a slightly different update time invariant

\begin{invariant} ({\bf Update-Time})
A vertex $v$ is only charged when $\del(v)$ increases. An edge $(u,v)$ is
only charged when either $\round{\eps \oweight(u,v)}{\del(u)}$ or 
$\round{\eps \oweight(u,v)}{\del(v)}$ increases.
\end{invariant}

Since for any $\beta$, $\round{\beta}{\del(v)}$ can 
increase at most $d / \beta$ before it exceeds $d$, 
it is clear that the Update-Time Invariant
guarantees the total update bound of the lemma.
\\
{\bf Maintaining the Two Invariants:}
Now, let us recall how the standard $\es$ algorithm works.
This algorithm only handles deletions, so that is what
we focus on first: we later show how to extend
our algorithm to handle weight-increases and insertions that
do not change distances. 
We will emphasize the parts of classic $\es$ that we will later change in our modified algorithm. 
First off, each vertex $v$ always maintains label information
about its neighbors: \emph{so in particular, for each edge $(u,v)$,
vertex $v$ stores a copy $\del_v(u) = \del(u)$} (\firstdif). 
The algorithm maintains a shortest path tree $T$ from the root $s$.
We say an edge $(u,v)$ is a tree edge if $(u,v) \in T$, and a non-tree edge if $(u,v) \in E \sm T$.
Whenever a non-tree edge is deleted, shortest distances do not change, so the algorithm does not have to do much.
Now, consider the deletion of a tree edge $(u,v)$ where $u$ is the parent of $v$ in $T$. The algorithm checks in constant time, by cleverly storing local label information $\del_v(z)$, if $v$ has another edge to a vertex $z$ with $\del(v) = \del_v(z) + w(z,v)$.
If yes, $v$ can be reconnected without a label increase,
the edge $(z,v)$ is added to the tree and all distance labels remain the same so we are done.
If not, label $\del(v)$ must increase, which in turn affects of the children of $v$, so
 \emph{The algorithm then examines all the children of $v$ and checks if they can be attached to the tree without increasing their distance in a similar manner.}
(\seconddif)
Some vertices will fail to be reattached with the same label, 
so the algorithm keeps a list of all vertices that need to be re-attached at a higher label.
Each time the algorithm discovers that a label $\del(v)$ must increase, it increases it by 1 (which is the most optimistic new label it can get) and returns it to the needs-reattachment list in at attempt to reattach $v$ with this higher label.
The algorithm examines the vertices in the list in an increasing order of their label.
Note that the label of a vertex may be increased many times as a result of an update.
\emph{Whenever the label $\del(v)$ of a vertex $v$ changes, the algorithm
must adjust $\del_u(v)$ for every edge $(u,v)$ in the graph.} (\thirddif.)

Note that the algorithm above fails to satisfy the Update-Time Invariant,
because although a vertex $v$ is only charged when $\del(v)$ increases 
(otherwise the search stops at $v$ and can be charged to the edge that led to $v$),
edge $(u,v)$ is charged whenever $\del(u)$ or $\del(v)$ increase even just by $1$.
In particular, the invariant is not preserved because of the work
in \seconddif\ and \thirddif.
The changes we implement are simple. We start with \firstdif: 
each vertex $v$ will still store local label information $\del_v(u)$, but now
it might be slightly inaccurate. In particular, we always have

\begin{invariant} ({\bf Local-Information})
$\del(u) \leq \del_v(u) \leq \round{\eps \oweight(u,v)}{\del(u)}$.
\end{invariant}

A vertex $v$ will now choose its parent based on the slightly inaccurate local labels $\del_u(v)$
instead of the true label $\del(u)$. Let $\parent(v)$ be the parent of $v$ in the tree.

\begin{invariant} ({\bf Parent-Choice})
\\
$\parent(v) = \mbox{argmax}_{u \in \neighbors(v)} \{ \del_v(u) + w(u,v) \}$ 
and $\del(v) = \del_v(\parent(v)) + w(\parent(v),v)$.
\end{invariant}

The Parent-Choice and Local-Information invariants combined 
clearly guarantee the Approximation Invariant.
We must now show how to maintain these efficiently. 
This leads us to \thirddif: when $\del(v)$ changes, we only update
$\del_u(v)$ for every vertex $u$ when
$\round{\eps \oweight(u,v)}{\del(v)}$ increases. 
This ensures that we
satisfy the Update-Time Invariant for this step of the algorithm,
while still ensuring that all local information adheres to the
Local-Information Invariant.

Note that to implement \thirddif\ in the above paragraph
while maintaining the Update-Time invariant,
we need a data structure for every vertex $v$,
such that whenever $\del(v)$ increases,
the data structure returns all vertices $u$ for which 
$\round{\eps \oweight(u,v)}{\del(v)}$ increased.
This is easy to do because $\oweight(u,v)$ is completely fixed for each edge,
i.e. it does not change as the graph changes. 
Each vertex $v$ will have $d$ buckets, where bucket $i$
contains all edges $(u,v)$ such that
$\round{\eps \oweight(u,v)}{i-1} < \round{\eps \oweight(u,v)}{i}$.
This initialization requires $O(nd)$ time to create $d$ buckets
for vertex, and then each edge $(u,v)$ is placed in $d/\oweight(u,v)$
buckets for vertices $u$ and $v$, leading to total time 
$O(nd + \sum_{(u,v) \in \matha} \frac{d}{\eps \oweight(u,v)})$,
as desired. Once the buckets are initialized,
whenever $\del(v)$ increases from $i$ to $j$,
$\del(v)$ simply notifies all edges in buckets $i+1, i+2, ..., j$
 
The last violation of the Update-Time Invariant was in \seconddif. 
When $\del(v)$ changes, instead of processing all children $u$,
we only need to process those for which $\del_u(v)$ changes. But in our modified
algorithm this only occurs when $\round{\eps \oweight(u,v)}{\del(v)}$ increases,
in keeping with the Update-Time Invariant.

We have thus shown how to modify classical $\es$ in a way that maintains
the Approximation Invariant and the Update-Time Invariant while processing deletions.
Increase-Weight$(u,v)$ is processed in the same way: if $(u,v)$
was not a tree edge we do nothing, and otherwise
if $u$ was the parent of $v$, then some invariants will be violated for $v$,
and we fix these in exactly the same way as for a deletion.

We now turn to processing insertions. In classical $\es$, it is
trivial to process an insertion of $(u,v)$ that does decrease distances
because the distance labels do not change, so we just spend $O(1)$
time fixing update $\del_v(u)$ and $\del_u(v)$.
In our algorithm, however, a strange difficulty arises:
because our labels are approximate,
even if the insertion of $(u,v)$ does not decrease distances,
it might nonetheless violate the Approximation Invariant: say that $\dist(u) = \dist(v) = 1000$
but $\del(v) = 1000(1 + \eps)$ while $\del(u) = 1000$ and we insert and edge of weight $1$.
In this case we  cannot afford to decrease $\del(v)$ because
our update time analysis relied on non-decreasing labels.
But note that although $v$ now violates the Approximation Invariant, $\del(v)$
is still a $(1 + \eps)$ approximation to $\dist(s,v)$, because $\dist(s,v)$ never decreases.
We thus rely on the idea used in the "Monotone" $\es$ tree of 
Henzinger et al. \cite{HenzingerKN13}: we simply never decrease distance labels.
More formally, our distance labels will satisfy the following \emph{relaxed} Approximation Invariant:
after every update, for every vertex $v$, either $\del(v)$ satisfies the standard Approximation
Invariant, or $\del(v) = \del^{\old}(v)$, where $\del^{\old}(v)$ was the distance
label before the update. By an induction on time,
it is easy to see that with this relaxed Approximation Invariant,
we still always have that $\del(v) \leq (1+\eps) \dist(s,v)$.
If $\del(v)$ satisfies the approximation invariant, the proof
is the same as in the Approximation Analysis above. Otherwise,
we have $\del(v) = \del^{\old}(v)$, in which case by our time induction we have
$\del^{\old}(v) \leq (1+\eps)\dist^{\old}(s,v)$, and since our updates
are guaranteed not to decrease distances, we have 
$\dist^{\old}(s,v) \leq \dist(s,v)$ and we are done. 
\end{proof}

\section{The decremental SSSP algorithm}

We now put together all our ingredients to proving the main Theorem \ref{thm:main}.

\begin{lemma}
\label{lem:wses-gtau}
For any threshold $\tau$, and depth bound $d = n \eps^{-1} \tau^{-1}$, 
we can maintain approximate
distances $\dptau(s,v)$ in total time 
$O(m\log^2(n) + n^2\log(n)\eps^{-2})$,
with the property that $\dptau(s,v) \leq (1+\eps)\bound{d}(\dtau(s,v)) + d\eps$
\end{lemma}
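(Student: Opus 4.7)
The plan is to apply the $\wses$ subroutine from the previous section directly to the threshold graph $\gtau$ with source $s$ and depth bound $d = n\eps^{-1}\tau^{-1}$. Maintaining $\gtau$ under the deletions and weight-increases to $G$ costs $O(m\log^2 n)$ by Lemma~\ref{lem:gtau-maintain}, and by Lemma~\ref{lem:gtau-decremental} the induced update sequence on $\gtau$ consists of deletions, weight-increases, and insertions that never decrease distances, which is exactly the precondition $\wses$ requires. A minor technicality is that $\gtau$ contains weight-$1/2$ edges while $\wses$ assumes integer weights $\geq 1$; I would resolve this by scaling all weights of $\gtau$ by $2$, running $\wses$ with depth bound $2d$, and dividing the returned labels by $2$ at query time. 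The approximation bound then follows directly from the $\wses$ guarantee $\dptau(s,v) \leq (1+\eps)\bound{d}(\dtau(s,v))$, which is strictly stronger than the lemma's stated bound (the $+d\eps$ is just slack we are free to ignore, presumably kept for convenience when this lemma is later combined with the additive error of Lemma~\ref{lem:gtau-approx}).

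The interesting part of the proof is the running-time analysis, where one must plug the edge bounds of Lemma~\ref{lem:gtau-maintain} into the $\wses$ update-time formula
\[
O\!\left(nd + \sum_{(u,v)\in \matha} \frac{d}{\eps\,\oweight(u,v)}\right).
\]
The $nd$ term equals $n^2/(\eps\tau) = O(n^2/\eps^2)$. I would then split $\matha$ by the original level of each inserted edge: the $O(n\log n)$ weight-$1/2$ auxiliary edges each contribute $O(d/\eps)$, giving $O(n\log(n)\cdot d/\eps) = O(n^2\log(n)/\eps^2)$ in total, while each of the at most $n\tau 2^i$ edges inserted at original level $i$ contributes at most $d/(\eps\, 2^i)$ (since $\oweight \geq 2^i$ at level $i$), so level $i$ contributes $O(n\tau d/\eps) = O(n^2/\eps^2)$.

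The main obstacle is preventing a $\log W$ factor from creeping in through the naive sum of $O(n^2/\eps^2)$ over all $O(\log W)$ levels. The key observation is that for $i > \log(n/\tau)$ the nominal bound $n\tau 2^i$ exceeds the trivial total of $n^2$ possible edges, so for those levels we may instead use the bound $n^2$; the per-level contribution then becomes $n^2 d/(\eps\, 2^i)$, which decays geometrically and sums to $O(n\tau d/\eps) = O(n^2/\eps^2)$ over all such high levels combined. For the remaining $O(\log(n/\tau)) = O(\log n)$ levels with $i \leq \log(n/\tau)$, we pay $O(n^2/\eps^2)$ apiece, yielding a total of $O(n^2\log(n)/\eps^2)$. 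Combining with the $O(m\log^2 n)$ cost of maintaining $\gtau$ gives the claimed running time.
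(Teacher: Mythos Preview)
Your level-counting argument for the light edges (splitting at $i=\log(n/\tau)$ and summing geometrically above) is a nice alternative to the paper's argument, which instead observes directly that only weights in the window $[1/(2\tau),\,d]$ can ever appear as light edges, yielding $O(\log(n/\eps))$ relevant levels. Both approaches handle that part of the sum.

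The gap is elsewhere: your claimed equality $nd = n^2/(\eps\tau) = O(n^2/\eps^2)$, and the analogous bound for the weight-$1/2$ component edges, are not valid for small $\tau$. The lemma is later applied (via Lemma~\ref{lem:distance-d}) with $\tau = n/(\eps d)$ for $d$ ranging up to $nW$, so $\tau$ can be as small as $1/(\eps W)$; in that regime $nd = n^2/(\eps\tau)$ blows up to $n^2 W$, and likewise each of the $O(n\log n)$ weight-$1/2$ edges contributes $2d/\eps$, far exceeding the target. Simply doubling the weights to make them integral does nothing to control these terms.

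The paper's fix is precisely the ``slack'' you dismissed. It first rounds every weight in $\gtau$ up to the nearest multiple of $\beta = d\eps/(2n)$, incurring at most $n\beta = d\eps/2$ additive error on any shortest path (this is where the $+d\eps$ in the statement comes from), and then divides all weights by $\beta$. The point is that the scaled depth bound becomes $d^s = d/\beta = 2n/\eps$, \emph{independent of $\tau$}, so the $nd^s$ term is $O(n^2/\eps)$ and each (now weight $\geq 1$) component edge contributes only $d^s/\eps = O(n/\eps^2)$. For the light edges, an edge with original level $i$ gets scaled weight $\geq 2^i/\beta$, so its contribution is $d^s\beta/(\eps 2^i) = d/(\eps 2^i)$, recovering exactly the per-level bound you computed. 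In short, the additive $d\eps$ term is not slack but the essential price paid to make the $nd$ term and the component-edge sum independent of $\tau$.
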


\begin{proof}
First we use Lemma \ref{lem:gtau-maintain}
to maintain the threshold graph $\gtau$
in $O(m\log^2(n))$ total update time.
Then, setting $\beta = \frac{d\eps}{2n}$, 
for each edge $(u,v) \in \gtau$,
we round the edge weight $w(u,v)$ up to the nearest multiple
of $\beta$.
It is easy to see that since every shortest path
in the graph contains at most $n$ edges,
this weight change incurs an additive error of at most $n\beta = d\eps/2$.
Since all edge weights are now multiple of $\beta$, 
we can divide all edge weights by factor of $\beta$ without changing shortest paths,
which results in a scaled graph $\gtaus$ with integral
weights for which we have to maintain shortest distances
up to depth $\ds = d/\beta = 2n \eps^{-1}$.
We now run $\wses(\gtaus,s,\ds)$ in the scaled graph,
and then scale the resulting distances back up by $\beta$.
Since $\wses$ is a $\oeps$ approximation, we get:
$\dptau(s,v) \leq (1+\eps)\bound{d}(\dtau(s,v) + d\eps/2) \leq (1+\eps)\bound{d}\dtau(s,v) + d\eps$. 

Recall that the total update time of $\wses(\gtaus, s,\ds)$ is
$O(n\ds + \sum_{(u,v) \in \matha} \frac{\ds}{\eps \oweight(u,v)})$.
We know that $O(n\ds) = O(n^2 \eps^{-1})$.
Now, $\gtaus$ contains two types of edges: 
edges between a vertex $v \in V$ and a component vertex $c \in \vtau \sm C$,
and $\tau$-light edges from the original graph. 
By Lemma \ref{lem:gtau-maintain}, the total number of component
edges ever inserted is $O(n\log(n))$, and because of our scaling
they all have weight at least $1$, 
so their total contribution to the sum is at most
$n\log(n) \eps^{-1} \cdot \ds = O(n^2\log(n)\eps^{-2})$.
For edges of the original graph, let us look at the contribution
of all edges $(u,v)$ with $\olevel(u,v) = i$. By
Lemmas \ref{lem:gtau-sparse} and \ref{lem:gtau-maintain},
the total number of such edges is at most $O(n \tau 2^i)$.
Each such edge has $\oweight(u,v) \geq 2^i$, which
has been scaled down by $\beta$ in $\gtaus$,
and so contributes
$\frac{\ds\beta}{\eps 2^i} = \frac{d}{\eps 2^i}$ to the sum. Thus
in total we have that each level $i$
contributes a total of $O(n \tau d \eps^{-1}) = O(n^2 \eps^{-2})$.

We complete the proof by arguing that there are only $\log(n)$ contributing levels.
For simplicity, let us look at the graph $\gtau$ before scaling.
First off, edges of weight more than $d = n\eps^{-1}/\tau$
can be ignored, since we do not care about distances greater than $d$
(we are working with $\bound{d}$).
Secondly, from the definition of $\cutoff(v)$, an edge $(u,v)$
of weight less than $1/(2\tau)$ will always be heavy and so never
appear in $\gtau$:
if $2\tau w(u,v) \leq 2^{\level(u,v)}\tau < 1$,
then $\cutoff(u)$ and $\cutoff(v)$ are always at least as large as $\level(u,v)$,
so $(u,v)$ will remain heavy. Thus, the only edges that appear in our graph
have weight between $1/(2\tau)$ and $n/(\eps\tau)$, which corresponds
to $\log(n/\eps) = O(\log(n))$ different levels of edge weights. 
\end{proof}

\begin{lemma}
\label{lem:distance-d}
For any distance bound $d$, with total update time $O(m\log^2(n) + n^2\log(n)\eps^{-2})$
we can maintain approximate distances $\dist_d(s,v)$ to all vertices $v$ such that \\
$\dist(s,v) \leq \dist_d(s,v) \leq \bound{d}(\dist(s,v)) + 15\eps d$
\end{lemma}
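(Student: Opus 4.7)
The plan is to instantiate Lemma \ref{lem:wses-gtau} on the threshold graph $\gtau$ for a carefully chosen $\tau$ that balances the structural approximation error of $\gtau$ (from Lemma \ref{lem:gtau-approx}) against the multiplicative approximation guarantee of $\wses$, and then add a single fixed additive correction to the returned estimate.

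Concretely, I would pick $\tau := \lceil n/(\eps d) \rceil$. This choice makes the wses depth bound $n\eps^{-1}\tau^{-1}$ in Lemma \ref{lem:wses-gtau} equal to (or just under) our target distance bound $d$, and simultaneously forces the additive structural slack $14n/\tau$ from Lemma \ref{lem:gtau-approx} to be at most $14\eps d$. Since Lemma \ref{lem:gtau-decremental} guarantees that distances in $\gtau$ never decrease, Lemma \ref{lem:wses-gtau} applies directly and maintains $\dptau(s,v)$ satisfying $\dptau(s,v) \leq (1+\eps)\bound{d}(\dtau(s,v)) + \eps d$ in total time $O(m\log^2(n) + n^2\log(n)\eps^{-2})$, matching the claimed running time with no leftover time budget to worry about. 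I would then simply return $\dist_d(s,v) := \dptau(s,v) + 14\eps d$.

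The two required inequalities then follow by routine bookkeeping. For the lower bound, Lemma \ref{lem:gtau-approx} gives $\dist(s,v) < \dtau(s,v) + 14\eps d$, and since $\dptau(s,v) \geq \dtau(s,v)$ (the $\wses$ subroutine never underestimates distances in its scaled, rounded copy of $\gtau$, and rounding only increases distances), we obtain $\dist(s,v) \leq \dist_d(s,v)$. For the upper bound I would split on whether $\dist(s,v) > d$ (in which case $\bound{d}(\dist(s,v)) = \infty$ and there is nothing to prove) or $\dist(s,v) \leq d$; in the latter case $\dtau(s,v) \leq \dist(s,v) \leq d$, so $\bound{d}$ acts as the identity on both, and combining the $(1+\eps)$ factor from $\wses$ with the two additive corrections yields $\dist_d(s,v) \leq \bound{d}(\dist(s,v)) + O(\eps d)$; a constant-factor rescaling of $\eps$ up front brings the constant down to the claimed $15$.

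There is no real conceptual obstacle: all the heavy lifting has already been done in Lemmas \ref{lem:gtau-approx} and \ref{lem:wses-gtau}. The only point requiring any thought is the choice of $\tau$, which must be small enough that the $\wses$ depth $n\eps^{-1}\tau^{-1}$ reaches the target $d$, yet large enough that the structural slack $14n/\tau$ is only $O(\eps d)$; the single value $\tau = \Theta(n/(\eps d))$ simultaneously satisfies both constraints, which is exactly why the two error terms collapse into one.
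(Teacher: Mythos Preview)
Your approach is essentially identical to the paper's: set $\tau = n/(\eps d)$, invoke Lemma~\ref{lem:wses-gtau} on $\gtau$, then add the fixed offset $14\eps d$ to $\dptau(s,v)$ and combine Lemma~\ref{lem:gtau-approx} with the $\wses$ guarantee. One small technical wrinkle: taking $\tau=\lceil n/(\eps d)\rceil$ makes the depth $n\eps^{-1}\tau^{-1}$ in Lemma~\ref{lem:wses-gtau} land \emph{below} $d$ rather than at or above it, which can break the upper bound when $\dtau(s,v)$ falls in the gap; the paper simply uses the exact (possibly non-integral) value $\tau=n/(\eps d)$, which Definition~\ref{def:threshold} explicitly permits, so the depth is exactly $d$.
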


\begin{proof}
Set $\tau = n / (\eps d)$. By Lemma \ref{lem:wses-gtau}, in total update time 
$O(m\log^2(n) + n^2\log(n)\eps^{-2})$ we can maintain values $\dptau(s,v)$
with $\dtau(s,v) \leq \dptau(s,v) \leq \oeps \bound{d}(\dtau(s,v)) + \eps d$.
Now, By Lemma \ref{lem:gtau-approx} we have 
$\dist(s,v) - 14n/\tau \leq \dtau(s,v) \leq \dist(s,v)$.
Plugging in our value $\tau = n/(\eps d)$ we get 
$\dist(s,v) - 14\eps d \leq \dtau(s,v) \leq \dist(s,v)$.
Thus, we have 
$\dist(s,v) - 14\eps d \leq \dptau(s,v) \leq \bound{d}(\dist(s,v)) + \eps d$.
If we return $\dist_d(s,v) = \dptau(s,v) + 14\eps d$, we get
the bound in the lemma.
\end{proof}

It is now easy to prove our main Theorem \ref{thm:main}. Observe that the maximum
distance we could possibly see is $nW$. Thus, for each
$i = 0,1,2,3...\ceil{\log(nW)}$, we use Lemma \ref{lem:distance-d} to maintain $\del_i(v) = \dist_{2^i}(s,v)$. By Lemma \ref{lem:distance-d}, the total update time is the desired
$O(m\log^2(n)\log(nW) + n^2\log(n)\log(nW)\eps^{-2})$.
We then output as our final answer:
 $\dist'(v) = \min_i \{ \del_i(v) \}$.
Since for each $i$ we have $\del_i \geq \dist(v)$, we have $\dist'(v) \geq \dist(s,v)$.
We now need to show that $\dist'(v) \leq (1 + O(\eps))\dist(s,v)$.
It is not hard to see that for $i = \ceil{\log(\dist(s,v))}$, we end up with
$\del_i(v) \leq \dist(s,v) + 30\eps \dist(s,v)$, as desired.

%%%%%%%%%%%%%%%%%%%%%%%%%%
\section{Conclusions}
%%%%%%%%%%%%%%%%%%%%%%%%%%
In this paper we presented the first \emph{deterministic} decremental SSSP algorithm for
\emph{weighted} undirected graphs that goes beyond the Even-Shiloach bound of $O(mn)$ total update time.
Previously such a result was only known for unweighted undirected graphs. The two main open questions
are further improving this total update time, 
and going beyond $O(mn)$ deterministically for \emph{directed} graphs.
Finally, we recall from the introduction that all existing deterministic $o(mn)$ 
results including ours are only able to return approximate shortest distances,
not the paths themselves. 

\newpage
%%%%%%%%%%%%%%%%%%%%%%%%%%%%%%%%%%%%%%%%%%%%%%%%%%%%%%%%%%%%%%%%%%%%%%%%%%
\bibliography{shiri_icalp2017}

\begin{thebibliography}{10}

\bibitem{AbrahamCT14}
Ittai Abraham, Shiri Chechik, and Kunal Talwar.
\newblock Fully dynamic all-pairs shortest paths: Breaking the o(n) barrier.
\newblock In {\em Approximation, Randomization, and Combinatorial Optimization.
  Algorithms and Techniques, {APPROX/RANDOM} 2014, September 4-6, 2014,
  Barcelona, Spain}, pages 1--16, 2014.

\bibitem{BaHaSe03}
Surender Baswana, Ramesh Hariharan, and Sandeep Sen.
\newblock Maintaining all-pairs approximate shortest paths under deletion of
  edges.
\newblock In {\em Proceedings of the Fourteenth Annual ACM-SIAM Symposium on
  Discrete Algorithms}, SODA, pages 394--403. Society for Industrial and
  Applied Mathematics, 2003.

\bibitem{Be09}
Aaron Bernstein.
\newblock Fully dynamic (2 + epsilon) approximate all-pairs shortest paths with
  fast query and close to linear update time.
\newblock In {\em Proceedings of the 50th Annual IEEE Symposium on Foundations
  of Computer Science}, FOCS, pages 693--702, 2009.

\bibitem{BernsteinC16}
Aaron Bernstein and Shiri Chechik.
\newblock Deterministic decremental single source shortest paths: beyond the
  o(mn) bound.
\newblock In {\em Proceedings of the 48th Annual ACM Symposium on Theory of
  Computing (STOC)}, pages 389--397, 2016.

\bibitem{BernsteinC17}
Aaron Bernstein and Shiri Chechik.
\newblock Deterministic partially dynamic single source shortest paths for
  sparse graphs.
\newblock In {\em Proceedings of the Twenty-Eighth Annual {ACM-SIAM} Symposium
  on Discrete Algorithms, {SODA} 2017, Barcelona, Spain, Hotel Porta Fira,
  January 16-19}, pages 453--469, 2017.

\bibitem{BernsteinR11}
Aaron Bernstein and Liam Roditty.
\newblock Improved dynamic algorithms for maintaining approximate shortest
  paths under deletions.
\newblock In {\em Proceedings of the Twenty-Second Annual {ACM-SIAM} Symposium
  on Discrete Algorithms}, SODA, pages 1355--1365, 2011.

\bibitem{Di06}
Yefim Dinitz.
\newblock Dinitz' algorithm: The original version and even's version.
\newblock In {\em Theoretical Computer Science, Essays in Memory of Shimon
  Even}, pages 218--240, 2006.

\bibitem{EvenS81}
Shimon Even and Yossi Shiloach.
\newblock An on-line edge-deletion problem.
\newblock {\em Journal of the ACM}, 28(1):1--4, 1981.

\bibitem{HenzingerK95FOCS}
Monika Henzinger and Valerie King.
\newblock Fully dynamic biconnectivity and transitive closure.
\newblock In {\em 36th Annual Symposium on Foundations of Computer Science,
  Milwaukee, Wisconsin, 23-25 October 1995}, pages 664--672, 1995.

\bibitem{HenzingerKN13}
Monika Henzinger, Sebastian Krinninger, and Danupon Nanongkai.
\newblock Dynamic approximate all-pairs shortest paths: Breaking the o(mn)
  barrier and derandomization.
\newblock In {\em Proceedings of the 54th Annual Symposium on Foundations of
  Computer Science}, FOCS, pages 538--547, 2013.

\bibitem{HenzingerKNFOCS14}
Monika Henzinger, Sebastian Krinninger, and Danupon Nanongkai.
\newblock Decremental single-source shortest paths on undirected graphs in
  near-linear total update time.
\newblock In {\em Proceedings of the 55th Annual Symposium on Foundations of
  Computer Science}, FOCS, pages 146--155, 2014.

\bibitem{HenzingerKNSTOC14}
Monika Henzinger, Sebastian Krinninger, and Danupon Nanongkai.
\newblock Sublinear-time decremental algorithms for single-source reachability
  and shortest paths on directed graphs.
\newblock In {\em Proceedings of the 46th Annual ACM Symposium on Theory of
  Computing (STOC)}, pages 674--683, 2014.

\bibitem{HeKrNa14}
Monika Henzinger, Sebastian Krinninger, and Danupon Nanongkai.
\newblock A subquadratic-time algorithm for decremental single-source shortest
  paths.
\newblock In {\em Proceedings of the Twenty-Fifth Annual ACM-SIAM Symposium on
  Discrete Algorithms}, SODA, pages 1053--1072, 2014.

\bibitem{HenzingerKN15}
Monika Henzinger, Sebastian Krinninger, and Danupon Nanongkai.
\newblock Improved algorithms for decremental single-source reachability on
  directed graphs.
\newblock In {\em Proceedings of the 42nd International Colloquium, {ICALP}},
  pages 725--736, 2015.

\bibitem{HenzingerKNS15}
Monika Henzinger, Sebastian Krinninger, Danupon Nanongkai, and Thatchaphol
  Saranurak.
\newblock Unifying and strengthening hardness for dynamic problems via the
  online matrix-vector multiplication conjecture.
\newblock In {\em Proceedings of the Forty-Seventh Annual ACM on Symposium on
  Theory of Computing (STOC)}, pages 21--30, 2015.

\bibitem{HeKi01}
Monika~Rauch Henzinger and Valerie King.
\newblock Maintaining minimum spanning forests in dynamic graphs.
\newblock {\em {SIAM} J. Comput.}, 31(2):364--374, 2001.

\bibitem{King99}
Valerie King.
\newblock Fully dynamic algorithms for maintaining all-pairs shortest paths and
  transitive closure in digraphs.
\newblock In {\em Proceedings of the 40th Annual Symposium on Foundations of
  Computer Science}, FOCS, pages 81--91, 1999.

\bibitem{Madry10}
Aleksander Madry.
\newblock Faster approximation schemes for fractional multicommodity flow
  problems via dynamic graph algorithms.
\newblock In {\em Proceedings of the 42nd {ACM} Symposium on Theory of
  Computing, {STOC} 2010, Cambridge, Massachusetts, USA, 5-8 June 2010}, pages
  121--130, 2010.

\bibitem{RoZw04b}
Liam Roditty and Uri Zwick.
\newblock On dynamic shortest paths problems.
\newblock {\em Algorithmica}, 61(2):389--401, 2011.

\bibitem{RoZw12}
Liam Roditty and Uri Zwick.
\newblock Dynamic approximate all-pairs shortest paths in undirected graphs.
\newblock {\em SIAM J. Comput.}, 41(3):670--683, 2012.

\end{thebibliography}
%%%%%%%%%%%%%%%%%%%5

\end{document}